\documentclass{ieeeaccess}

\usepackage{cite}
\usepackage{amsmath,amssymb,amsfonts,amsthm}
\usepackage{graphicx}
\usepackage{booktabs}
\usepackage{array}
\usepackage{textcomp}
\usepackage{hyperref}

\graphicspath{{figures/}}

\newcommand{\ket}[1]{|#1\rangle}

\newcommand{\braopket}[3]{\langle#1|#2|#3\rangle}
\newcommand{\rgrad}{\mathrm{rgrad}}
\newcommand{\sym}{\mathrm{sym}}
\newcommand{\polinaorcid}{\href{https://orcid.org/0000-0003-3288-3111}{0000-0003-3288-3111}}

\newtheorem{proposition}{Proposition}

\history{}
\doi{}

\title{A Controlled Null Result for Per-Gate Riemannian Optimization in VQE}
\author{\uppercase{Polina Belozerova}\authorrefmark{1} and
\uppercase{Pavel Osinenko}\authorrefmark{1},\authorrefmark{2},\authorrefmark{3}}

\address[1]{AI in Dynamic Action Lab, Central University, 7 Gasheka St.,
Bldg. 1, Moscow, 123056, Russia (e-mail:
p.belozerova@centraluniversity.ru; p.osinenko@gmail.com)}
\address[2]{Center for Engineering Systems and Sciences, Skoltech,
30 Bolshoy Boulevard, Bldg. 1, Moscow, 121205, Russia}
\address[3]{Sirius University of Science and Technology, 1 Olimpiyskiy
Avenue, Sirius, 354340, Russia}

\corresp{Corresponding author: Polina Belozerova
(ORCID: \polinaorcid; e-mail: p.belozerova@centraluniversity.ru).}

\begin{document}

\begin{abstract}
Parametrized gates in a variational quantum eigensolver (VQE) live on a product
of unitary groups, which makes Riemannian optimization a natural candidate for
improving training.  We test this candidate in the clean per-gate setting: each
single-qubit gate is optimized directly on $U(2)$ and compared with Adam on an
unconstrained Euler-angle chart representing the same physical gate, initialized
from the same unitaries and tuned with a separate learning-rate sweep.  On a
validated hydrogen-chain benchmark, we
find no measurable practical advantage from the Riemannian machinery.  The
direct manifold optimizer and the Euler-coordinate baseline are interleaved in
final error and success rate; the classical coordinate baseline is not slower in
iterations on the successful cells; and the choices of Riemannian metric,
retraction, and momentum vector transport do not change the conclusion.  We
also identify four failure modes that can manufacture an apparent positive
result: a trivial product-state benchmark, a projected-matrix Adam baseline that
discards most of its own step, a non-unitary inherited Cayley retraction that
lets the reported energy pass below the variational minimum, and a learning-rate
grid whose optimum sits at its boundary.  Each failure mode produced a plausible
multi-seed result in our own workflow before being caught.  The practical output
of this paper is therefore a controlled null result and a validation checklist
for VQE optimizer comparisons.
\end{abstract}

\begin{keywords}
null result, optimizer benchmarking, Riemannian optimization, unitary gates,
validation protocol, variational quantum eigensolver
\end{keywords}

\titlepgskip=-21pt
\maketitle

\section{Introduction}

The variational quantum eigensolver minimizes the Rayleigh quotient
$E(\theta)=\braopket{\psi(\theta)}{H}{\psi(\theta)}$ over a parametrized
quantum circuit \cite{peruzzo2014,cerezo2021}.  Since quantum gates are
unitary, the parameter space is not intrinsically Euclidean.  This has motivated
Riemannian-gradient and natural-gradient methods for quantum circuits
\cite{luchnikov2021,stokes2020,wiersema2023}, as well as the hope that direct
optimization over gate matrices might repair training pathologies seen in angle
coordinates.

The present paper asks a narrower question.  In a hardware-efficient VQE ansatz
whose trainable factors are single-qubit $U(2)$ gates, does direct per-gate
Riemannian Adam outperform a matched classical Adam baseline on Euler angles?
This is the cheapest setting in which the conjecture can be tested cleanly: the
two methods represent the same physical circuit, share the same initial
unitaries, and differ only in the coordinate or manifold representation used by
the optimizer.

Our answer is negative in the resolved benchmark cells.  Once the baseline is
physically matched, learning rates are swept per method, runtime validity guards
are enforced, and invalid or under-calibrated cells are excluded, no stable
advantage remains for the Riemannian representation.  This is not a statement
about all Riemannian quantum optimization.  Large-block circuit compilation,
noisy hardware training, quantum natural-gradient preconditioning, and other
architectures are outside the scope of this experiment.  The claim is only that
per-gate $U(2)$ Riemannian machinery does not produce a detectable practical
benefit in the validated exact-state VQE regime studied here.

The paper makes three contributions.

\begin{enumerate}
  \item It reports a controlled null result for per-gate $U(2)$ Riemannian Adam
        against an Euler-angle Adam baseline with matched initialization and
        learning-rate selection.
  \item It shows that one apparent ablation axis is mathematically vacuous: on
        square unitary factors the canonical and Euclidean Riemannian gradients
        differ by exactly a constant factor, which scale-invariant adaptive
        optimizers absorb.
  \item It documents four measured failure modes and distills them into a
        validation protocol for future VQE optimizer comparisons.
\end{enumerate}

The stronger question of why VQE depth thresholds appear, and how to separate
representability floors from trainability bottlenecks, is deliberately reserved
for the companion threshold-mechanism manuscript.

\section{Benchmark and Methods}

\subsection{Validated molecular ladder}

A VQE optimizer benchmark is meaningful only if the target is non-trivial for
the ansatz.  We use neutral hydrogen chains in STO-3G with Jordan--Wigner
mapping and 1.0~\AA{} spacing.  System size is varied by truncating virtual
orbitals, not by freezing occupied orbitals.  The latter can empty the active
space and create product-state ``benchmarks'' that any optimizer can solve.

Table~\ref{tab:bench} records the validation diagnostics used before comparing
optimizers: exact ground energy, particle number, correlation against the
Hartree--Fock state, Schmidt rank, and half-chain entropy.  The 12-qubit entry
is useful for threshold studies but is excluded from the headline optimizer
comparison because no tested depth solves it.

\begin{table*}[t]
\centering\small
\begin{tabular}{rlrrrrrr}
\toprule
qubits & system & $e^-$ & $\lambda_{\min}$ (Ha) & $\langle N\rangle$ &
$\lambda_{\min}-E_{\mathrm{HF}}$ & Schmidt rank & $S_{1/2}$ \\
\midrule
4  & H$_2$              & 2 & $-1.101150$ & 2.000 & $-0.0350$ & 2  & 0.198 \\
6  & H$_4$, 3 orbitals  & 4 & $-2.127938$ & 4.000 & $-0.0294$ & 5  & 0.218 \\
8  & H$_4$              & 4 & $-2.166387$ & 4.000 & $-0.0678$ & 16 & 0.426 \\
10 & H$_6$, 5 orbitals  & 6 & $-3.200749$ & 6.000 & $-0.0652$ & 22 & 0.509 \\
12 & H$_6$              & 6 & $-3.236066$ & 6.000 & $-0.1005$ & 56 & 0.675 \\
\bottomrule
\end{tabular}
\caption{Benchmark validation.  Every listed target is entangled, correlated by
more than chemical accuracy, and in the correct particle-number sector without a
penalty term.  The 12-qubit target is not used to rank optimizers because no
tested depth solves it.}
\label{tab:bench}
\end{table*}

\subsection{Circuit and optimizer comparison}

The ansatz starts from $\ket{+}^{\otimes n}$.  Each layer applies one trainable
$U(2)$ gate per qubit followed by a fixed CNOT entangler.  The simulation stores
the full state vector as a rank-$n$ tensor and applies gates by tensor
contractions; it is not a truncated tensor-train computation.

The manifold method treats every trainable gate as a point of
$\mathrm{St}(2,2)=U(2)$.  Given Euclidean gradient $G$ at $U$, the Euclidean
Riemannian gradient is
\begin{align}
  \rgrad_{\mathrm{euc}} &= G - U\sym(U^\dagger G),
  \label{eq:rgrad-euc}\\
  \sym(A)&=\frac{A+A^\dagger}{2}.
\end{align}
We use a QGOpt-style Riemannian Adam update with a tangent first moment, scalar
second moment, optional vector transport, and a selected retraction
\cite{luchnikov2021,becigneul2019}.  The safe retractions considered in this
study are polar, QR, exact exponential, and an antisymmetrized Cayley map.

The primary baseline is Adam on an unconstrained Euler chart for the same full
$U(2)$ gate,
\begin{equation}
  U = e^{i\phi} R_z(a)R_x(b)R_z(c).
  \label{eq:euler}
\end{equation}
The initial unitaries are identical: the random manifold gates are decomposed
into Euler coordinates and reconstructed before the run.  This baseline never
leaves the unitary group and is the relevant comparison for practical circuit
training.  Adam on raw matrix entries followed by projection is retained only as
a failure-mode study, not as a valid competitor.

Learning rates are swept separately for each method.  If the best value lands
on the smallest or largest grid point, the cell is considered unresolved until
the grid is expanded.  Seeds are paired across methods.  We report both success
rate at chemical accuracy ($1.6\times10^{-3}$ Ha) and geometric mean best
energy error.

\begin{figure*}[t]
\centering
\includegraphics[width=0.78\textwidth]{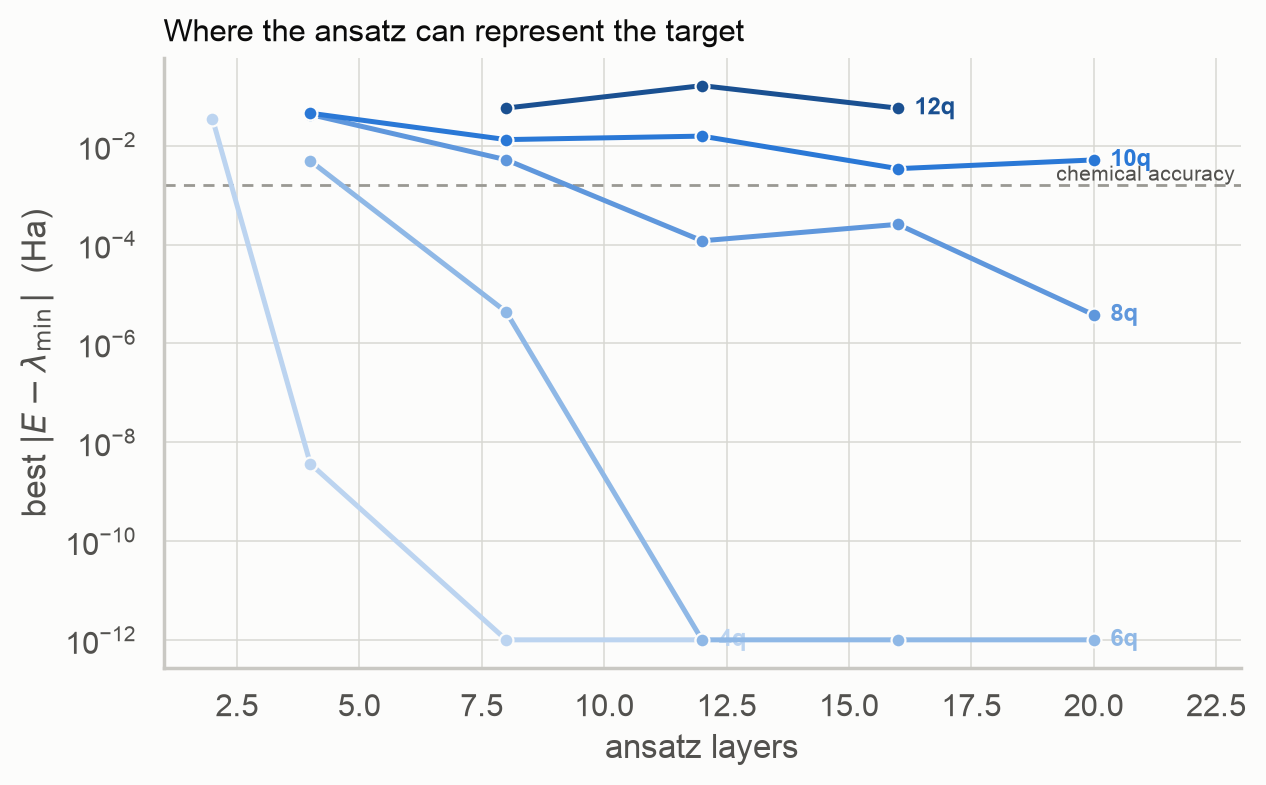}
\caption{Depth calibration used before ranking optimizers.  Operating points are
chosen near the shallowest depths that can reach chemical accuracy; cells that
no tested method can solve are not used as evidence for optimizer superiority.}
\label{fig:depth}
\end{figure*}

\section{Controlled Null Result}

\subsection{Final error and success rate}

Table~\ref{tab:baseline} and Fig.~\ref{fig:baseline} summarize the
headline comparison.  The manifold optimizer is better at two sizes and worse
at two sizes, with no monotone trend in system size.  The projected matrix
baseline is worse by orders of magnitude, but this is an artifact of the
baseline itself, not evidence for a useful geometric advantage.

\begin{table*}[t]
\centering\small
\begin{tabular}{rrlll r}
\toprule
qubits & layers & manifold $U(2)$ & Euler-angle Adam & projected Adam & ratio \\
\midrule
4  & 4  & $7.8\times10^{-8}$ (100\%) & $6.0\times10^{-8}$ (100\%) & $2.1\times10^{-1}$ (0\%) & $0.76$ \\
6  & 8  & $1.3\times10^{-5}$ (100\%) & $2.5\times10^{-5}$ (100\%) & $4.5\times10^{-1}$ (0\%) & $1.91$ \\
8  & 12 & $2.1\times10^{-3}$ (50\%)  & $8.0\times10^{-4}$ (83\%)  & $7.9\times10^{-1}$ (0\%) & $0.37$ \\
10 & 16 & $1.4\times10^{-2}$ (0\%)   & $1.6\times10^{-2}$ (0\%)   & $1.5\times10^{0}$  (0\%) & $1.19$ \\
\bottomrule
\end{tabular}
\caption{Geometric mean best $|E-\lambda_{\min}|$ and success rate at the best
learning rate for each method.  The ratio is Euler/manifold, so values above
one favor the manifold optimizer.  The comparison is interleaved rather than
systematically positive.  The projected-Adam row is shown only as a diagnostic
failure mode.}
\label{tab:baseline}
\end{table*}

\begin{figure*}[t]
\centering
\includegraphics[width=0.82\textwidth]{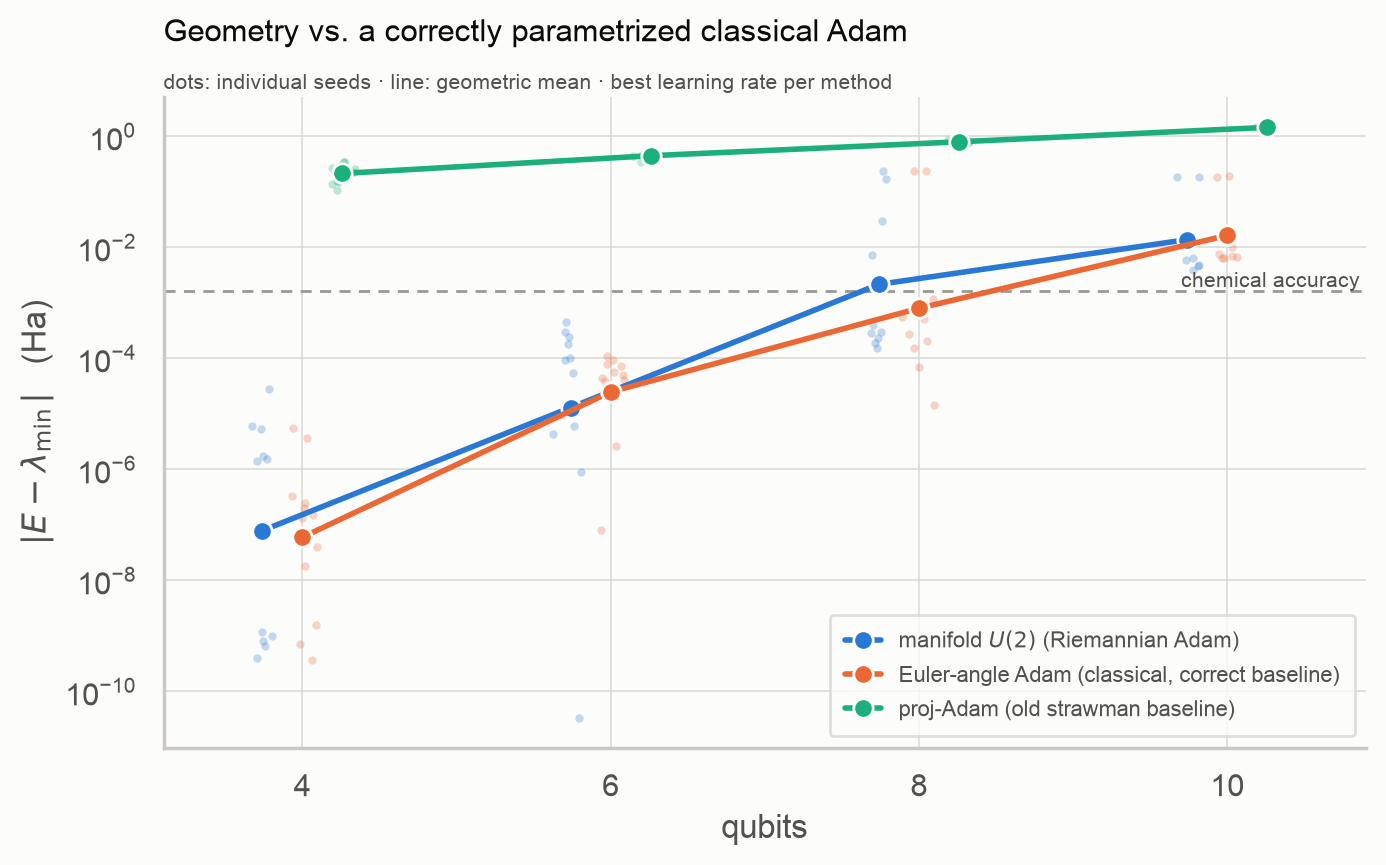}
\caption{Matched Euler-coordinate Adam versus direct per-gate manifold Adam.
Dots denote individual seeds and lines denote the geometric mean at the best
learning rate per method.  The two valid methods are interleaved; projected
matrix Adam sits far above both.}
\label{fig:baseline}
\end{figure*}

\subsection{Iterations to target}

Final error alone can hide a useful speed difference.  On hardware, each
iteration corresponds to another round of circuit evaluations, so iteration
count is a practical cost.  In the successful cells the coordinate
baseline is not slower in iterations and is faster at 6q and 8q.  This supports
the interpretation that the null result is not merely a final-accuracy artifact.
The exact wall-clock timing of the two code paths is implementation-dependent
and is therefore not used as a primary scientific claim.

\subsection{Metric axis}

\begin{proposition}[Metric collapse on square unitary factors]
For $U\in U(n)$, the Euclidean and canonical Riemannian gradients induced by a
Euclidean ambient gradient $G$ obey
\begin{equation}
  \rgrad_{\mathrm{can}} = 2\,\rgrad_{\mathrm{euc}}
\end{equation}
under the conventions used in this implementation.
\end{proposition}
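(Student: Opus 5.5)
We start from the two standard Stiefel-manifold formulas and specialize them to the square case $\mathrm{St}(n,n)=U(n)$. The Euclidean Riemannian gradient is \eqref{eq:rgrad-euc}, $\rgrad_{\mathrm{euc}} = G - U\sym(U^\dagger G)$. For the canonical metric we use the convention of the QGOpt-style implementation \cite{luchnikov2021}:
\begin{equation}
  \rgrad_{\mathrm{can}} = G - U G^\dagger U .
\end{equation}
The first step is to fix this convention explicitly. The factor of $2$ in the statement depends on it: the Edelman--Arias--Smith form $G - UG^\dagger U$ corresponds to the canonical inner product $\mathrm{tr}\,\xi^\dagger(I-\tfrac12 UU^\dagger)\xi$ written without rescaling. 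Since the proposition says ``under the conventions used in this implementation,'' we anchor the proof to exactly this expression.

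The core step is a short algebraic reduction that uses $UU^\dagger = U^\dagger U = I$, which holds only because the factor is square. We rewrite the Euclidean gradient as
\[
  G - \tfrac12 U(U^\dagger G + G^\dagger U) = G - \tfrac12 G - \tfrac12 UG^\dagger U = \tfrac12\,(G - UG^\dagger U).
\]
Here the term $UU^\dagger G$ collapses to $G$. Comparing with the canonical expression gives $\rgrad_{\mathrm{can}} = 2\,\rgrad_{\mathrm{euc}}$ immediately.

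As a sanity check, we would note why the collapse is special to $U(n)$. For $p<n$ the projector $I-UU^\dagger$ is nonzero. The two gradients then differ in their normal-complement components: those components are weighted by $1$ in the canonical gradient and by $1$ versus $\tfrac12$ for the tangential skew part in the Euclidean one. On square factors that complement is empty, and only the skew part $U\,\mathrm{skew}(U^\dagger G)$ survives, with weights $1$ and $\tfrac12$ respectively. We would also note that the resulting tangent vector $G-UG^\dagger U = U(U^\dagger G - G^\dagger U)$ is $U$ times an anti-Hermitian matrix, so it lies in $T_U U(n)$.

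The main obstacle is not the algebra but the convention bookkeeping. Some references define the canonical gradient with a factor $\tfrac12$, or define $\sym$ differently, and either choice would turn the constant into $1$ or $\tfrac12$. To handle this, we would state the exact formulas used in the code, including the $\sym$ definition given after \eqref{eq:rgrad-euc}, and derive the constant from them. We would then remark that any positive constant is absorbed by Adam's scale-invariant normalization. This remark is the point the proposition serves for the ablation axis, since the two metrics then yield identical optimizer trajectories up to $\epsilon$-regularization effects.
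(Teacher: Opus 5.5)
Your argument is correct and is the same as the paper's proof. Both substitute the Euclidean projection $G - U\sym(U^\dagger G)$ and the canonical form $G - UG^\dagger U$, use unitarity of the square factor to reduce both to multiples of the skew-Hermitian generator $U(U^\dagger G - G^\dagger U)$ with ratio two, and then appeal to Adam's invariance under a global rescaling. You make explicit the algebra and the canonical-gradient convention that the paper leaves implicit. The only slip is in your $p<n$ aside: the complement components $(I-UU^\dagger)G$ are identical in both gradients, and only the relative weight of the skew part differs, but this plays no role in the proof.
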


\begin{proof}
On a square unitary factor, $U^\dagger U=I$ and the tangent space is
$\{U\Omega:\Omega^\dagger=-\Omega\}$.  Substituting the Stiefel canonical and
Euclidean projections and simplifying with $U^\dagger U=I$ leaves the same
skew-Hermitian tangent generator up to the constant factor two.  Adam with a
scalar second moment is invariant to such a global rescaling of the gradient,
apart from numerical epsilon effects.
\end{proof}

The experiment confirms this collapse: the canonical-metric row reproduces the
Euclidean row to the printed precision.  Hence the metric choice cannot explain
any positive or negative result in this per-gate $U(2)$ setting.

\subsection{Retraction and transport}

With correct implementations, the safe retractions are indistinguishable at the
level of success counts and final errors.  Momentum vector transport also does
not produce a measurable effect.  Fig.~\ref{fig:ablation} records the compact
ablation panel.

\begin{figure*}[t]
\centering
\includegraphics[width=0.95\textwidth]{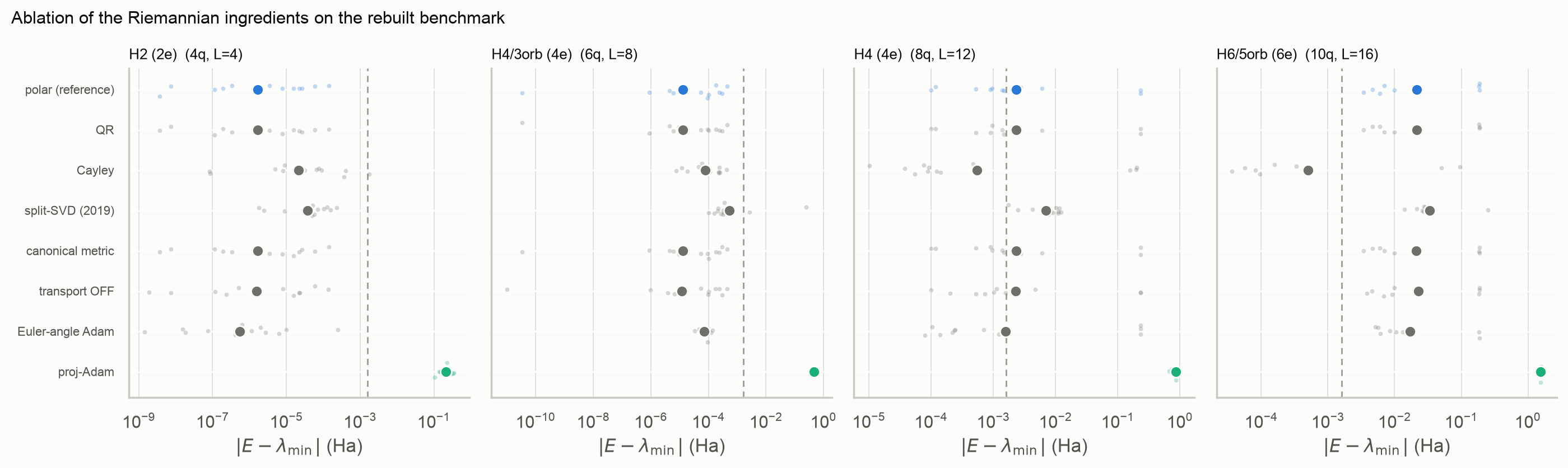}
\caption{Ingredient and failure-mode ablations.  The safe geometric choices do
not separate in the resolved cells; apparent large effects arise from invalid
baselines, unsafe retractions, or insufficient learning-rate grids.}
\label{fig:ablation}
\end{figure*}

\section{How False Positive Advantages Arose}

This section is included because the negative result is only useful if it is
reproducible.  The following four defects each produced a plausible optimizer
advantage in our own workflow before being caught.

\subsection{Trivial benchmark}

Freezing occupied orbitals in a small molecular active-space ladder can leave a
target with zero active electrons.  In the affected LiH construction, several
nominal benchmark entries had exact product-state ground states with Schmidt
rank one and zero entanglement entropy.  Solving such a cell to tiny energy
error is not evidence for a better optimizer; it is evidence that the benchmark
has collapsed to a state-preparation sanity check.

Diagnostic: report active electron count, exact particle number, Schmidt rank,
and entanglement entropy of the exact target before ranking optimizers.

\subsection{Projected matrix Adam}

Adam on unconstrained gate matrix entries followed by a polar projection spends
most of its intended step in the normal direction.  In the measured cells,
$93$--$96\%$ of the step is discarded by projection, and the fraction grows
with system size.  A Riemannian method is expected to beat this strawman, but
the comparison does not establish a practical advantage over a coordinate
baseline that practitioners would actually use.

Diagnostic: decompose the update into tangent and normal components before the
projection and report the normal fraction.

\subsection{Non-unitary retraction}

An inherited Cayley implementation constructed a non-skew-Hermitian generator
when the Adam momentum carried a small normal component.  The resulting update
left the unitary group.  Since the Hamiltonian ground energy is negative, a
state norm above one can push the reported value of $\braopket{\psi}{H}{\psi}$
below $\lambda_{\min}$.  If the metric is the absolute error
$|E-\lambda_{\min}|$, this bug can look like excellent convergence.  In the
10q cell it created an apparent $85\%$ versus $0\%$ success advantage that
disappeared after antisymmetrizing the Cayley generator.

Diagnostic: assert $\|\psi\|=1$, gate unitarity, finite metrics, and
$E\ge\lambda_{\min}$ at every iteration.  A variational algorithm that beats
the variational minimum has a bug.

\subsection{Learning-rate boundary}

In an over-parametrized 8q cell, the first learning-rate grid made the manifold
method appear hundreds of times more accurate than Euler-coordinate Adam.  The
best value for both methods was the smallest learning rate tested.  Extending
the grid downward made both methods reach the exact ground state on all seeds.

Diagnostic: no optimizer comparison is accepted if the best learning rate for
any method lies on the grid boundary.

\begin{table*}[t]
\centering\small
\resizebox{\textwidth}{!}{%
\begin{tabular}{lll}
\toprule
defect & apparent result & diagnostic \\
\midrule
product-state benchmark & tiny VQE error on a trivial target & Schmidt rank, entropy, electron count \\
projected matrix baseline & orders of magnitude geometric advantage & normal fraction of intended step \\
non-unitary retraction & success below variational minimum & state norm, gate unitarity, $E\ge\lambda_{\min}$ \\
learning-rate boundary & method advantage at smallest grid value & expanded grid with interior optimum \\
\bottomrule
\end{tabular}}
\caption{The four false-positive mechanisms.  Each one produced a plausible
multi-seed result before the corresponding diagnostic was enforced.}
\label{tab:false-positive-summary}
\end{table*}

\section{Validation Protocol}

We propose the following minimum checklist for VQE optimizer comparisons.

\begin{enumerate}
  \item Validate that the exact target is non-trivial: Schmidt rank greater
        than one and non-zero entanglement entropy.
  \item Validate that the target is genuinely correlated: the Hartree--Fock
        error should be comfortably above chemical accuracy.
  \item Validate the physical sector, especially particle number for molecular
        Hamiltonians.
  \item Calibrate depth before comparing optimizers.  A cell no tested method
        can solve measures expressivity or budget, not optimizer quality.
  \item Match physical initialization across methods.
  \item Sweep learning rates separately for each method and require interior
        optima for accepted comparisons.
  \item Assert state norm, gate unitarity, finite metrics, and the variational
        lower bound at runtime.
  \item Report both success rate and geometric mean final error with paired
        intervals over shared seeds.
\end{enumerate}

\section{Limitations and Companion Work}

The result is intentionally narrow.  It does not claim that Riemannian
optimization is useless in quantum computing.  It does not address noisy
hardware, measurement-shot effects, large $SU(2^n)$ blocks, Hamiltonian
simulation compilation, or quantum natural gradient with a physically different
metric.  It also does not explain the full depth threshold observed in larger
VQE campaigns.  The companion manuscript studies that mechanism using
architecture-specific representability floors and task-aligned trainability
diagnostics.

\section{Conclusion}

In the tested per-gate $U(2)$ VQE setting, direct Riemannian Adam does not show
a measurable practical advantage over a matched Euler-angle Adam baseline.
Large apparent advantages came from benchmark, baseline, retraction, or tuning
defects.  The main contribution of this first-stage paper is therefore not a new
optimizer but a negative result with a concrete validation protocol for future
optimizer claims.

\begin{IEEEbiographynophoto}{Polina Belozerova}
is an expert in mathematical modeling,
quantitative research, and high-performance computing for modern AI systems.
She graduated from the Department of General and Applied Physics at the Moscow
Institute of Physics and Technology (MIPT) in 2017 and continued her studies
in Computer Science at the Skolkovo Institute of Science and Technology
(Skoltech) in 2019.  Her early research experience includes internships at
Forschungszentrum Juelich and the Caesar Research Centre in Bonn, Germany.
During her Ph.D. studies at MIPT, specializing in modern mathematics, she
received competitive research funding for work in quantum computing and
innovation.  She subsequently led an R\&D direction at Huawei and expanded
her expertise into quantitative finance and AI leadership as Head of Machine
Learning and Chief Product Officer of an AI laboratory at a financial company.
She currently leads the HPC, reinforcement learning, and LLM research
direction at the AI in Dynamic Action (AIDA) Lab, Central University.  Her
research focuses on numerical methods, computational efficiency, and
principled approaches to modern AI for complex dynamic systems.
\end{IEEEbiographynophoto}

\begin{IEEEbiographynophoto}{Pavel Osinenko}
is an expert in advanced control, automation,
and reinforcement learning.  He holds a diploma with honors from Bauman
Moscow State Technical University (2009), a Ph.D. degree from Dresden
University of Technology, Germany (2014), and a habilitation degree from
Chemnitz University of Technology, Germany (2025), officially recognized as
Doctor of Physical and Mathematical Sciences in Russia.  His research
contributions range from foundational aspects of control and dynamic-system
theory to reinforcement learning with guarantees, and include work published
in journals such as \emph{Automatica}, \emph{IEEE Transactions on Automatic
Control}, \emph{IEEE Control Systems Letters}, \emph{IEEE/CAA Journal of
Automatica Sinica}, \emph{Neurocomputing}, and \emph{Journal of Intelligent
Manufacturing}.  He has worked on and supervised R\&D projects involving
robotic and autonomous systems, digital agriculture, automation, and applied
computer vision.  He also developed and led the development of software
frameworks for rapid prototyping and designing reinforcement learning agents,
including Regelum.  He leads the AI in Dynamic Action (AIDA) Lab at Central
University, Moscow, where his central research direction is reinforcement
learning with guarantees combining advanced automatic control with dynamic AI.
\end{IEEEbiographynophoto}

\EOD
\end{document}